\documentclass[journal]{IEEEtran}
\usepackage{amsmath,amssymb,amsthm}
\usepackage{booktabs}
\usepackage{hyperref}
\usepackage{cite}
\usepackage{graphicx}
\graphicspath{{figures/}}

\newtheorem{definition}{Definition}
\newtheorem{proposition}{Proposition}
\newtheorem{remark}{Remark}

\newtheorem{claim}{Claim}

\begin{document}

\title{Stateful Pricing and Allocation for Repeated Constrained DER Coordination
in Distribution Networks}

\author{Shaun~Sweeney,~\IEEEmembership{}
        Peter~Kilby,~\IEEEmembership{}
        Blake~Penney,~\IEEEmembership{}
        Komeil Moghaddasi,~\IEEEmembership{}
        Sunera Mudiyanselage~\IEEEmembership{}
\thanks{S. Sweeney is with the Dyson School of Design Engineering, Imperial College
London, London, U.K. (e-mail: s.sweeney21@imperial.ac.uk).}
\thanks{P. Kilby is with Energy Queensland, Brisbane, Australia
(e-mail: peter.kilby@energyq.com.au).}
\thanks{B. Penney is with Black Japan Group, Melbourne, Australia.}
\thanks{K. Moghaddasi and S. Mudiyanselage are with Queensland University of
Technology, Australia.}
\thanks{Manuscript received June 2026.}
}

\maketitle

\begin{abstract}
Distribution networks with high penetrations of distributed energy resources (DERs)
must repeatedly allocate limited network capability in two directions: under import
scarcity, which flexible demand is served; under export congestion, which generation is
curtailed. Dynamic operating envelopes (DOEs) enforce hard feasibility bounds but lack
intertemporal correction; dynamic network prices (DNPs) provide an allocative signal
but cannot guarantee constraint satisfaction. This paper develops a stateful
cyber-physical coordination mechanism---termed an Automatic Market Maker (AMM)---as an
additive coordination layer for machine-to-machine DER access, combining dual fairness
states for import and export, bounded bilateral prices driven by a voltage-aware deficit
signal, and feasibility-constrained matching within a two-tier MV/LV architecture.
Experiments on the CSIRO MV+33LV feeder dataset compare five mechanisms and
additionally benchmark the fair-over-time DOE formulations of Moring
\emph{et al.}~\cite{moring2024} (FET, FOT, FUH) on the same dataset. Relative to
equal-allocation DOE, the AMM reduces unserved flexible demand by 76\% (96.0~MWh to
23.2~MWh) with zero thermal violations and reduces export curtailment (85.4~MWh to
64.5~MWh). Near-identical DOE and DOE-GREEDY performance confirms that heuristic
choice alone does not improve repeated constrained outcomes. The AMM reaches
inter-feeder Jain index 0.9998 annually, outperforming all DOE variants from month~6
onwards. Direct benchmarking against FET/FOT/FUH shows those mechanisms achieve higher
worst-feeder equity (0.989--0.990 versus AMM's 0.914) through an explicit max-min MV
objective, but operate offline over predetermined horizons and do not provide bilateral
scarcity signals, real-time operation, or import-side participant-level intertemporal
correction. The AMM and FET/FOT/FUH address different operational objectives and are
potentially composable; their combination is identified as a primary direction for
future work.
\end{abstract}

\begin{IEEEkeywords}
Distribution networks, distributed energy resources, dynamic operating envelopes,
dynamic network prices, fairness, curtailment allocation, DER coordination,
machine-to-machine coordination, congestion pricing, voltage-aware pricing.
\end{IEEEkeywords}

\section{Introduction}
\label{sec:intro}

\subsection{Motivation}

Distribution networks with high penetrations of rooftop photovoltaic (PV), household
batteries, and electric vehicles (EVs) increasingly require active coordination of
time-coupled demand and generation. Two recurrent allocation problems arise. Under
import scarcity, the operational question is which flexible demand is served when
available import capability is insufficient to satisfy all requests. Under export
congestion, the corresponding question is which generation is curtailed when local
export capability is limited. These two cases differ operationally but share the same
structural feature: a constrained physical resource must be allocated repeatedly over
time.

At the same time, distribution-level coordination is becoming increasingly automated.
Machine-to-machine (M2M) communication enables devices and aggregators to observe
signals and update behaviour at high frequency. This capability is operationally useful,
but it also changes the strategic properties of the coordination protocol. Allocation
rules that are acceptable under slow or human-mediated participation may become
exploitable under automated, repeated interaction.

These observations motivate five coordination requirements for active distribution
networks: (1)~\emph{feasibility}: network constraints must be respected;
(2)~\emph{scarcity signalling}: participants should observe system tightness and adapt;
(3)~\emph{allocation}: limited headroom should be distributed by a transparent rule;
(4)~\emph{intertemporal correction}: repeated disadvantage must be visible and
correctable across time; and (5)~\emph{strategic robustness}: the protocol should not
create obvious gains from high-frequency timing or misreporting under M2M participation.

Current mechanisms address these requirements only partially. DOEs provide strong
feasibility control, but practical implementations determine feasible limits interval
by interval and have not yet demonstrated participant-level intertemporal correction in
practical deployment. DNPs provide a price signal consistent with the intuition
underlying distribution-level market design, but price-only response does not guarantee
that aggregate behaviour remains feasible, and repeated coordination asymmetry is not
directly tracked. Existing fair-over-time DOE formulations~\cite{moring2024} achieve
strong worst-feeder equity on the export side but do not provide bilateral scarcity
signals, real-time operation, or import-side participant-level intertemporal
correction---the question of which flexible demand is repeatedly under-served during
import scarcity events. Table~\ref{tab:moring} (Section~\ref{sec:disc_fot}) summarises
the design differences; the two approaches address different operational objectives and
are potentially composable. FET/FOT/FUH optimise fairness over predetermined horizons
by solving an optimisation problem with explicit historical coupling; the AMM operates
as a real-time coordination layer with bilateral scarcity signalling, participant-level
import correction, and no look-ahead requirements. Neither substitutes for the other.
The proposed AMM addresses these gaps by combining intertemporal delivery-state
tracking, bounded bilateral scarcity signals, and hard-feasibility matching within a
single real-time coordination layer.

\subsection{Contributions}

The paper proposes a \emph{stateful cyber-physical coordination protocol} and makes six contributions: (1)~\textbf{Bidirectional fairness formalisation} (Section~\ref{sec:fairness}): dual delivery-state variables for import and export, selectively active by operating regime---the only mechanism in the comparison set carrying participant-level intertemporal state for both directions. (2)~\textbf{Bounded bilateral deficit-driven pricing} (Section~\ref{sec:pricing}): buy and sell prices driven by a common deficit signal, with LMSR-derived tightness $\lambda_e(t)=\sigma((q_e-F_e)/b_e)\in(0,1)$ as one admissible implementation. (3)~\textbf{Feasibility-constrained matching} (Section~\ref{sec:matching}): an explicit LP under time-window, power, and holonic constraints, with hard feasibility before fairness weighting and $O(\max_\ell N_\ell^2)$ parallel per-interval cost. (4)~\textbf{Voltage-aware scarcity interpretation} (Section~\ref{sec:voltage}): voltage deviation as a directly measurable, directionally informative network stress indicator; simulations use passive inverter assumptions. (5)~\textbf{Architectural M2M strategic analysis} (Section~\ref{sec:incentive}): deviation-mode analysis showing gains are bounded by the protocol structure without claiming full incentive compatibility. (6)~\textbf{Empirical evaluation including fair-over-time DOE benchmarks} (Sections~\ref{sec:setup}--\ref{sec:results}): CSIRO MV+33LV, comparing eight mechanisms; the H1b result---that adding a stateless price signal to DOE \emph{worsens} outcomes---is a practically important counterintuitive finding.

\section{Background and Related Work}
\label{sec:background}

DOEs define time-varying import/export limits
\begin{equation}
  \underline{P}_i(t)\le P_i(t)\le\bar{P}_i(t)
\end{equation}
at each connection point $i$, derived from network state using load flow, state
estimation, or telemetry-driven methods~\cite{arena2021,petrou2021,guinman2023}. DOEs
allow customers to install more distributed generation and avoid unnecessary
curtailment by replacing fixed limits with dynamic per-connection envelopes, but
practical implementations remain primarily interval-wise feasibility tools: they
identify what is feasible at time $t$ without maintaining participant-level state
describing who has been repeatedly under-served or curtailed.

Recent work by Moring \emph{et al.}~\cite{moring2024} addresses this gap with
fair-over-time formulations (FET, FOT, FUH) that couple allocation decisions across
time, improving export-side equity without proportional sacrifice of flexibility. The
present paper shares the intertemporal motivation but pursues a different mechanism:
rather than extending the envelope calculation itself, we introduce a stateful
coordination layer that operates \emph{on top of} physical feasibility bounds, addresses
both import and export directions, and provides bounded bilateral scarcity signals.
FET/FOT/FUH do not carry bilateral scarcity signals or participant-level state for
import-side intertemporal correction; on this dataset, import-side unserved demand
under the baseline DOE is 96.0~MWh, which the AMM reduces by 76\% through stateful
import-side allocation. A key design question is therefore not only how to compute
feasible envelopes, but how to allocate within them across repeated rounds.

DNPs~\cite{zhang2026} communicate congestion economically but cannot guarantee
feasibility and do not maintain prior delivery state, and may be exploited by fast
agents at machine timescales without an explicit allocation step.

Fairness objectives for single-period DOEs have been developed by Liu
\emph{et al.}~\cite{liu2022}, Petrou \emph{et al.}~\cite{petrou2021b}, Alam
\emph{et al.}~\cite{alam2023}, Poudel \emph{et al.}~\cite{poudel2023}, Nazir and
Almassalkhi~\cite{nazir2022}, and Yi and Verbi\v{c}~\cite{yi2022}; a broader review
appears in~\cite{soares2024}. All treat each interval independently. Sweeney~\cite{sweeney2026eneco}
defines a \emph{value-native protocol} as one combining a pre-allocation scarcity
signal, a deterministic feasibility-constrained allocation rule, and a stateful
participant-level delivery ratio; the proposed AMM belongs to this category. The broader
economic interpretation of the AMM, including budget balance, residual cost allocation,
and subscription-backed recovery of non-fuel costs, is developed
in~\cite{sweeney2026eneco}. The present paper instead isolates the coordination and
allocation properties of the stateful layer. To the best of the authors' knowledge, no
existing single mechanism simultaneously combines bounded bilateral pricing, hard
feasibility-preserving matching, and participant-level intertemporal fairness states
within a single real-time coordination layer.

Table~\ref{tab:gap} summarises the gap. Transactive energy frameworks~\cite{kok2016}
anticipate the M2M setting; this paper therefore treats M2M robustness as a useful
design criterion for distribution-level flexibility markets.

\begin{table}[!t]
  \caption{Literature Gap: Coordination Properties Across Existing Approaches.
  \checkmark = present; $\circ$ = partial/export-only; $\times$ = absent.}
  \label{tab:gap}
  \centering
  \footnotesize
  \setlength{\tabcolsep}{4pt}
  \begin{tabular}{lccccc}
    \toprule
    Mechanism & Stateful & Fair-over-time & Bilateral & Import & Export \\
              &          &                & pricing   & alloc. & alloc. \\
    \midrule
    DOE~\cite{arena2021,guinman2023}        & $\times$ & $\times$ & $\times$ & \checkmark & $\circ$ \\
    DNP~\cite{zhang2026}                    & $\times$ & $\times$ & $\circ$  & \checkmark & $\times$ \\
    FET~\cite{moring2024}                   & $\times$ & $\times$ & $\times$ & \checkmark & \checkmark \\
    FOT~\cite{moring2024}                   & \checkmark & \checkmark & $\times$ & \checkmark & \checkmark \\
    FUH~\cite{moring2024}                   & \checkmark & \checkmark & $\times$ & \checkmark & \checkmark \\
    \textbf{AMM (this work)}                & \checkmark & \checkmark & \checkmark & \checkmark & \checkmark \\
    \bottomrule
  \end{tabular}
\end{table}

\section{Bidirectional Fairness and Dual Fairness States}
\label{sec:fairness}

\subsection{Definition and Operating Regimes}

\begin{definition}[Bidirectional fairness]
A distribution coordination mechanism is \emph{bidirectionally fair} if it maintains,
for each participant $n$: (1)~a service delivery ratio $f^{\mathrm{srv}}_n\in[0,1]$
tracking cumulative service received relative to cumulative service requested under
import scarcity; and (2)~a curtailment delivery ratio $f^{\mathrm{exp}}_n\in[0,1]$
tracking cumulative export realised relative to cumulative export available absent
curtailment. Each state feeds back into future allocation under the corresponding
operating regime.
\end{definition}

We distinguish three operating regimes over a window $W$: R1~(abundance, no constrained
allocation required); R2~(export congestion, curtailment allocation required);
R3~(import scarcity, service allocation required). Under R2 the relevant fairness state
is $f^{\mathrm{exp}}_n$; under R3, it is $f^{\mathrm{srv}}_n$.

\subsection{State Definitions and Fairness Weights}

Let $D^{\mathrm{srv}}_n$, $E^{\mathrm{srv}}_n$ denote cumulative energy delivered and
requested under import scarcity. Define
\begin{equation}
  f^{\mathrm{srv}}_n \triangleq D^{\mathrm{srv}}_n/E^{\mathrm{srv}}_n \in [0,1].
  \label{eq:fsrv}
\end{equation}
A lower value indicates repeated under-service.

\begin{remark}[Operationalising $E^{\mathrm{srv}}_n$]
Where explicit request signals are unavailable, $E^{\mathrm{srv}}_n$ may be estimated
from metered demand and device capability, analogous to the export-side denominator.
In the CSIRO simulation reported in Section~\ref{sec:setup}, $E^{\mathrm{srv}}_n$ is
taken as the metered demand trajectory; participants submit explicit import requests at
each interval so the denominator is directly observed rather than estimated.
Implementation in deployments lacking explicit request signals should report the
estimation method.
\end{remark} congestion, define the
available-export denominator in period $p$ as
$e^p_n \triangleq \min(\hat{g}^p_n,\hat{a}^p_n)$,
where $\hat{g}^p_n$ is the submitted generation forecast and $\hat{a}^p_n$ is a
physics-based peer estimate. Then
\begin{equation}
  f^{\mathrm{exp}}_n \triangleq
  D^{\mathrm{exp}}_n / \sum_p e^p_n \in [0,1].
  \label{eq:fexp}
\end{equation}
The $\min$ operator limits gains from forecast inflation above the physics-based
estimate. The AMM applies inverse-delivery-ratio weights to provide intertemporal
correction. Under R3:
$\omega^{\mathrm{srv}}_r \propto \psi_r/(f^{\mathrm{srv}}_{n(r)}+\epsilon)$,
where $\psi_r$ is an optional priority multiplier and $\epsilon>0$ prevents
division by zero. Under R2:
$\omega^{\mathrm{exp}}_s \propto 1/(f^{\mathrm{exp}}_{n(s)}+\epsilon)$.
These weights affect matching priority, not the published price.

\begin{remark}[Fairness-state initialisation]
Three strategies are available: uniform initialisation ($f^{\mathrm{srv}}_n=f^{\mathrm{exp}}_n=1$, used here), warm-start from prior-period state, and warm-start from historical delivery data.
The first is simple and fair but produces a convergence lag in early months; the others
eliminate the lag or allow retrospective compensation. Section~\ref{sec:disc_fot}
discusses the observed early-period Jain trajectory in this context.
\end{remark}

\section{Bilateral Deficit-Driven Pricing}
\label{sec:pricing}

\subsection{Deficit, Composite Scarcity, and Price Functions}

At node $n$ and time $t$, define the net deficit
$\Delta_{t,n} \triangleq C^{\mathrm{req}}_{t,n} - S^{\mathrm{avail}}_{t,n}$.
A composite scarcity signal
\begin{equation}
  \tilde{\alpha}_{t,n} = \alpha^{\mathrm{instant}}_{t,n}\,\alpha^{\mathrm{forecast}}_{t,n}\,
  \alpha^{\mathrm{network}}_{t,n}\,\alpha^{\mathrm{stability}}_{t,n}
\end{equation}
collects physical scarcity dimensions, each factor in $(0,1]$ and decreasing as the
corresponding scarcity dimension tightens. In the proposed design, these factors are
computed from metered, estimated, or forecast physical quantities rather than from
willingness-to-pay. The network factor $\alpha^{\mathrm{network}}_{t,n}$ is defined
in Section~\ref{sec:voltage} using voltage deviation and thermal utilisation.
The AMM maps the deficit to buy and sell prices:
\begin{align}
  BP_{t,n} &= BP^{\mathrm{base}}_{t,n}
  + F^{\mathrm{energy}}_{t,n}(\Delta_{t,n})
  + F^{\mathrm{stab}}_{t,n}(1-\alpha^{\mathrm{stability}}_{t,n}), \label{eq:bp}\\
  SP_{t,n} &= SP^{\mathrm{base}}_{t,n}
  + H^{\mathrm{energy}}_{t,n}(\Delta_{t,n})
  + H^{\mathrm{stab}}_{t,n}(1-\alpha^{\mathrm{stability}}_{t,n}), \label{eq:sp}
\end{align}
where $F^{\mathrm{energy}}$ and $H^{\mathrm{energy}}$ are strictly increasing in
$\Delta_{t,n}$ with $F(0)=H(0)=0$. A bounded implementation uses
$F^{\mathrm{energy}}_{t,n}(\Delta_{t,n})=b_e\lambda_e(t)$ via the LMSR tightness
signal~\cite{hanson2003}:
\begin{equation}
  \lambda_e(t) = \sigma\!\left(\frac{q_e(t)-F_e(t)}{b_e}\right)\in(0,1),
  \label{eq:lmsr}
\end{equation}
with $b_e$ set so that $\lambda_e=0.5$ at 100\% utilisation. Other bounded monotone
forms are equally admissible. Both $BP_{t,n}$ and $SP_{t,n}$ are determined from
system state, not bid values; submitted price bounds determine \emph{admissibility},
not price discovery. The pricing law induces self-corrective negative feedback: when
$\Delta_{t,n}>0$, higher $BP_{t,n}$ discourages import while higher $SP_{t,n}$
attracts export; both return toward base as $\Delta_{t,n}$ falls.

The important structural feature is that both $BP_{t,n}$ and $SP_{t,n}$ are determined
from system state rather than from bid values. The AMM is therefore distinguished from
existing distribution-level flexibility market proposals by combining bid-independent
pricing, hard feasibility, and intertemporal state in a single runtime layer.

\subsection{Deficit Contraction}

\begin{proposition}[Deficit reduction under bilateral response]
\label{prop:deficit}
Suppose demand and supply responses are locally Lipschitz with aggregate coefficients
$\eta_B,\eta_S>0$, and the energy-price components are Lipschitz with constant $L>0$.
Then the local deficit iteration is a contraction if $(\eta_B+\eta_S)L<2$.
\end{proposition}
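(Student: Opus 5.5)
We model the local deficit dynamics as a one-dimensional fixed-point iteration and verify the contraction condition through its Lipschitz constant. Fix node $n$ and write $\Delta_k$ for the deficit at iteration $k$. The prices $BP_k$ and $SP_k$ are set from $\Delta_k$ through \eqref{eq:bp}--\eqref{eq:sp}. Participants then respond: requested import $C^{\mathrm{req}}$ falls as $BP$ rises, and offered supply $S^{\mathrm{avail}}$ rises as $SP$ rises. The next deficit is
\[
\Delta_{k+1}=T(\Delta_k)\triangleq C^{\mathrm{req}}\bigl(BP(\Delta_k)\bigr)-S^{\mathrm{avail}}\bigl(SP(\Delta_k)\bigr).
\]
The stability terms are held fixed over the local iteration, so they act as constants. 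Since the claimed threshold is $2$ rather than $1$, the form consistent with the statement is a damped or relaxed update. We take the standard price-adjustment form in which the deficit moves by the induced response:
\[
\Delta_{k+1}=\Delta_k-\bigl[\,\delta C_k+\delta S_k\,\bigr],
\]
where $\delta C_k$ is the demand reduction and $\delta S_k$ the supply increase caused by the price increments $F(\Delta_k)$ and $H(\Delta_k)$. We state this modelling convention explicitly at the outset.

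\textbf{Step 1 (linearise the composite map).} By local Lipschitz continuity and monotonicity, there are slopes
\[
0<\beta_k\le L \quad\text{(prices in } \Delta\text{)}, \qquad 0<\gamma^B_k\le\eta_B,\quad 0<\gamma^S_k\le\eta_S \quad\text{(responses in price)}.
\]
Strict monotonicity of $F$ and $H$ gives positive lower bounds on the price slopes. Responses go the correct direction: demand decreasing in $BP$, supply increasing in $SP$. With these slopes, the mean-value theorem applied to two deficit values $\Delta,\Delta'$ gives
\[
T(\Delta)-T(\Delta')=\bigl(1-\kappa\bigr)(\Delta-\Delta'), \qquad \kappa=\gamma^B\beta^F+\gamma^S\beta^H\in\bigl(0,(\eta_B+\eta_S)L\bigr].
\]
If the slopes are not differentiable, we instead use Lipschitz chord quotients on the interval between $\Delta$ and $\Delta'$.

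\textbf{Step 2 (contraction bound).} We have $|1-\kappa|<1$ exactly when $0<\kappa<2$. Positivity of $\kappa$ follows from the sign structure: $F$ and $H$ are strictly increasing and the responses are nontrivial. The upper bound follows from $\kappa\le(\eta_B+\eta_S)L<2$. The modulus $\rho=\sup|1-\kappa|<1$ on the local neighbourhood. Banach's fixed-point theorem then gives geometric convergence to the equilibrium deficit and uniqueness of that equilibrium. Because $F(0)=H(0)=0$, the equilibrium is $\Delta=0$ whenever base prices clear.

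\textbf{Main obstacle.} The hardest step is obtaining a \emph{uniform} positive lower bound on $\kappa$. Without one, $\rho$ could approach $1$ and give only nonexpansiveness, not contraction. The LMSR form makes this delicate, because $\sigma'$ decays in the tails. I would handle this by restricting to a compact neighbourhood of the operating point, where $\sigma'$ and the response slopes are bounded below by continuity. This is why the proposition is stated as a \emph{local} contraction. A secondary subtlety is sign consistency: the demand response must actually reduce the deficit rather than merely change it. I would state this as the standing interpretation of the aggregate coefficients $\eta_B$ and $\eta_S$.
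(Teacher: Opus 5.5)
Your route is the paper's route. The paper also writes the damped update $\Delta(k+1)=\Delta(k)-(\eta_B\,\partial BP/\partial\Delta+\eta_S\,\partial SP/\partial\Delta)\Delta(k)$, so the multiplier is $1-\kappa$, and it uses the Lipschitz constants to bound $\kappa$ above by $(\eta_B+\eta_S)L$. Where you differ is in what you do with that bound, and your handling is the sound one. The paper goes directly from $|\partial BP/\partial\Delta|,|\partial SP/\partial\Delta|\le L$ to $|\Delta(k+1)|\le|1-(\eta_B+\eta_S)L|\,|\Delta(k)|$. That step does not follow. Upper bounds on the slopes only give $\kappa\in[\kappa_{\min},(\eta_B+\eta_S)L]$ with $\kappa_{\min}$ possibly $0$, and $\sup|1-\kappa|$ over that range is $\max\{1-\kappa_{\min},(\eta_B+\eta_S)L-1\}$. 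When $(\eta_B+\eta_S)L\le1$ the paper's inequality actually runs the wrong way for every $\kappa<(\eta_B+\eta_S)L$. You correctly identify that a uniform $\kappa_{\min}>0$ is needed, and that it yields the modulus $\rho=\max\{1-\kappa_{\min},(\eta_B+\eta_S)L-1\}<1$.

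The gap is that this lower bound does not follow from the proposition's hypotheses, so you must add it as an assumption rather than derive it. Your Step~1 sentence ``strict monotonicity of $F$ and $H$ gives positive lower bounds on the price slopes'' is false. Strict monotonicity makes each chord slope positive, but not uniformly so.

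The statement as written actually fails without the extra assumption. Take linear responses and $F=H=\Delta-\sin\Delta$, which is strictly increasing, $2$-Lipschitz, and satisfies $F(0)=0$. Then $T(\Delta)=\Delta-(\eta_B+\eta_S)(\Delta-\sin\Delta)$ has $T'(0)=1$. So no neighbourhood of the equilibrium is contracted, even though $\eta_B+\eta_S<1$ gives $(\eta_B+\eta_S)L<2$.

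Your compactness-plus-continuity patch works only when the slopes are continuous and strictly positive at the operating point. That holds for the LMSR price, since $\sigma'>0$ everywhere. It is a separate assumption for the response slopes, and it is unavailable in your non-differentiable chord-quotient branch.

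So make ``$\kappa\ge\kappa_{\min}>0$ on the neighbourhood'' an explicit hypothesis, or assume $C^1$ regularity with $\kappa(0)>0$. Three further fixes:
\begin{itemize}
\item Delete the Step~1 claim about strict monotonicity.
\item Drop the static map $C^{\mathrm{req}}(BP(\Delta))-S^{\mathrm{avail}}(SP(\Delta))$ from your first display. It is a different iteration, for which the threshold would be $1$, not $2$.
\item Take the neighbourhood to be an interval centred at the fixed point $0$. Then $|T(\Delta)|\le\rho|\Delta|$ gives the self-map property that Banach's theorem needs.
\end{itemize}
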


\begin{proof}
Write the deficit update as
$\Delta_{t,n}(k+1)=\Delta_{t,n}(k)-(\eta_B\partial BP/\partial\Delta+\eta_S\partial SP/\partial\Delta)\Delta_{t,n}(k)$.
By the Lipschitz assumption, $|\partial BP/\partial\Delta|\le L$ and
$|\partial SP/\partial\Delta|\le L$, hence
$|\Delta_{t,n}(k+1)|\le|1-(\eta_B+\eta_S)L|\,|\Delta_{t,n}(k)|$.
If $(\eta_B+\eta_S)L<2$ the contraction factor is strictly less than one.
\end{proof}

\begin{remark}
The contraction condition is a sufficient engineering motivation for the design. For the
LMSR implementation, $L\le1/(4b_e)$, giving $L\le2.5$ at $b_e=0.1$. Whether
$(\eta_B+\eta_S)L<2$ holds depends on scenario-specific response coefficients; empirical
stability is confirmed by the zero-violation outcomes in Table~\ref{tab:annual}.
\end{remark}

Table~\ref{tab:pricing} summarises coordination properties across all eight mechanisms.

\begin{table*}[!t]
  \caption{Qualitative Comparison of Coordination Properties Across Eight Mechanisms.
  The AMM uniquely combines all four desirable coordination properties.
  FET/FOT/FUH carry intertemporal state and enforce hard feasibility via OPF but
  publish no scarcity prices and do not address M2M strategic exposure.
  $\circ$ = partial or single-period only.}
  \label{tab:pricing}
  \centering
  \small
  \setlength{\tabcolsep}{6pt}
  \begin{tabular}{lccccc}
    \toprule
    Mechanism   & Buy signal & Sell signal & Hard feasibility & Stateful fairness & Strategic exposure \\
    \midrule
    DOE         & $\times$   & $\times$   & \checkmark & $\times$   & Medium      \\
    DOE-GREEDY  & $\times$   & $\times$   & \checkmark & $\times$   & Medium      \\
    DNP         & \checkmark & $\circ$    & $\times$   & $\times$   & High        \\
    FET         & $\times$   & $\times$   & \checkmark & $\circ$    & Not~studied \\
    FOT         & $\times$   & $\times$   & \checkmark & \checkmark & Not~studied \\
    FUH         & $\times$   & $\times$   & \checkmark & \checkmark & Not~studied \\
    AMM         & \checkmark & \checkmark & \checkmark & \checkmark & Lower       \\
    \bottomrule
  \end{tabular}
\end{table*}

\section{Validity-Bounded Matching Under AMM Coordination}
\label{sec:matching}

\subsection{Requests, Offers, and the Matching LP}

A buyer request $r\in\mathcal{R}_{t,n}$ is
$(E_r,[\underline{t}_r,\overline{t}_r],\bar{P}_r,p^{\max}_r,\Gamma_r)$
and a seller offer $s\in\mathcal{S}_{t,n}$ is
$(\bar{E}_s,[\underline{t}_s,\overline{t}_s],\bar{P}_s,p^{\min}_s,\Lambda_s)$.
A pair is \emph{price-valid} at $\tau$ if
$BP_{\tau,n(r)}\le p^{\max}_r$ and $SP_{\tau,n(s)}\ge p^{\min}_s$;
submitted bounds determine admissibility, not price discovery.

Let $x_{r,s,\tau}\ge0$ denote energy scheduled from seller $s$ to buyer $r$ at $\tau$.
The AMM solves
\begin{equation}
  \max_{X}\;\sum_{r,s,\tau}\omega_{r,s,\tau}\,x_{r,s,\tau}
  \label{eq:matching}
\end{equation}
subject to: buyer energy and power limits,
\begin{align}
  \sum_s\sum_{\tau=\underline{t}_r}^{\overline{t}_r} x_{r,s,\tau} &\le E_r, \label{eq:buyer_e}\\
  \sum_s x_{r,s,\tau} &\le \bar{P}_r\Delta t; \label{eq:buyer_p}
\end{align}
seller energy and power limits,
\begin{align}
  \sum_r\sum_{\tau=\underline{t}_s}^{\overline{t}_s} x_{r,s,\tau} &\le \bar{E}_s, \label{eq:seller_e}\\
  \sum_r x_{r,s,\tau} &\le \bar{P}_s\Delta t; \label{eq:seller_p}
\end{align}
and holonic network capacity
\begin{equation}
  \sum_{r,s}\chi_{n,h}(r,s)\,x_{r,s,\tau}\le F_h(\tau), \label{eq:holon}
\end{equation}
where $h$ indexes the active holonic domain.

\subsection{Holarchic Restriction and Regime-Selective Weights}

For the two-tier MV/LV structure, define the dominant layer
$\ell^\star_t:=\arg\min_{\ell}\alpha^\ell_t$; candidate counterparties are restricted
to the tighter level. If candidate sets are restricted by this rule and all accepted
matches satisfy~\eqref{eq:holon}, accepted transactions cannot violate the dominant
holon capacity (proof immediate from the admissible set definition). The joint weight
$\omega_{r,s,\tau}$ in~\eqref{eq:matching} is regime-selective: under R3,
$\omega^{\mathrm{R3}}_{r,s,\tau}=\psi_r/(f^{\mathrm{srv}}_{n(r)}+\epsilon)$;
under R2, $\omega^{\mathrm{R2}}_{r,s,\tau}=1/(f^{\mathrm{exp}}_{n(s)}+\epsilon)$;
under R1, $\omega=1$. State affects matching priority, not the published price path.

\textit{Computational complexity.}
The LP~\eqref{eq:matching}--\eqref{eq:holon} has $O(RST_w)$ variables. For the CSIRO
MV+33LV feeder (up to 120 participants, $T_w=1$), each LV sub-LP involves at most
$14{,}400$ variables and the MV LP at most $1{,}089$, all solvable in well under one
second. Each LV sub-LP is independent, giving $O(\max_\ell N_\ell^2)$ rather than
$O((\sum_\ell N_\ell)^2)$ per-interval cost.

The per-interval procedure (Fig.~\ref{fig:blockdiagram}) is:
(1)~collect buyer requests and seller offers;
(2)~compute $\Delta_{t,n}$ and $\tilde{\alpha}_{t,n}$;
(3)~publish $BP_{t,n}$ and $SP_{t,n}$;
(4)~determine active holonic domain;
(5)~remove price-invalid pairs;
(6)~construct feasible matches;
(7)~apply fairness-weighted selection under R2/R3;
(8)~settle and update the relevant fairness state.
Each step is deterministic: prices are published before allocation, feasibility
precedes fairness weighting, and state updates only after settlement.

\begin{figure}[!t]
  \centering
  \includegraphics[width=\columnwidth]{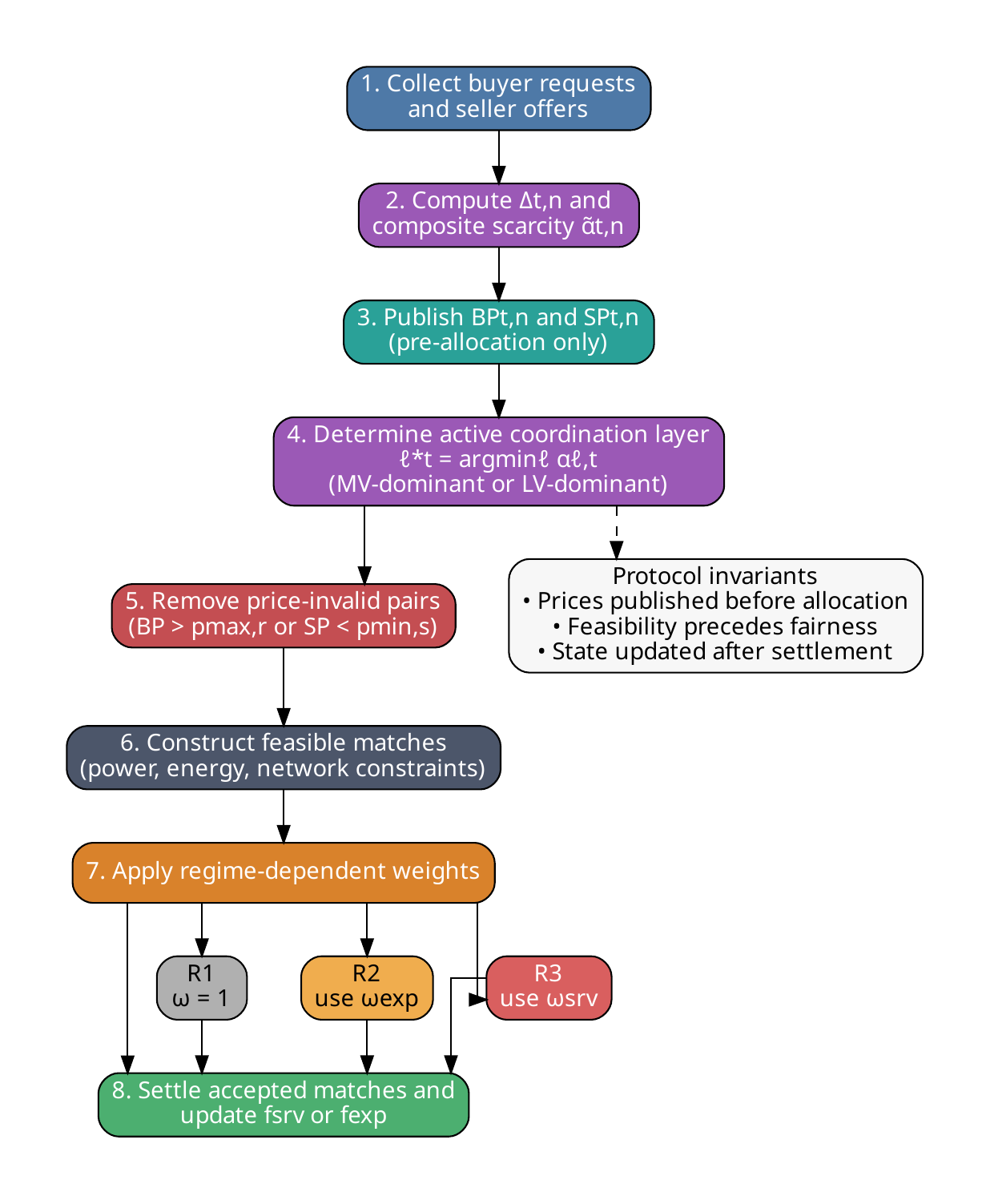}
  \caption{Per-interval AMM operational procedure. Prices are published before
  allocation (Step~3), feasibility enforced before fairness weighting (Steps~5--7),
  and state updated after settlement (Step~8).}
  \label{fig:blockdiagram}
\end{figure}

\section{Voltage as a Network-Stress Signal}
\label{sec:voltage}

Voltage deviation $\Delta V_{t,n}=|\varepsilon_{t,n}|/V_{\mathrm{nom}}$ is used as
the network stress indicator in preference to alternatives such as line loading or
transformer utilisation for three reasons. First, voltage is directly measurable at
customer connection points via AMI without requiring a network impedance
model~\cite{guinman2023}. Second, voltage deviations are directionally informative:
undervoltage signals import-side congestion while overvoltage signals export-side
congestion. Third, voltage is the primary observable in Australian DOE regulatory
practice~\cite{guinman2023,kilby2026}. It enters the composite scarcity signal as
\begin{equation}
\begin{split}
  \alpha^{\mathrm{network}}_{t,n}
  &= \exp(-\theta^-_n\max(0,-\varepsilon_{t,n}))
     \exp(-\theta^+_n\max(0,\varepsilon_{t,n})) \\
  &\quad \times \exp(-\phi_n\mathrm{cong}_{t,n}).
\end{split}
  \label{eq:network_alpha}
\end{equation}

Under a linearised network model and a bounded monotone tightness mapping, the AMM
tightness signal admits a local shadow-price interpretation as a bounded proxy for the
marginal value of relaxing the active capacity constraint~\cite{schweppe1988,hogan1992,kelly1998}.
Near the constraint boundary, both the OPF dual variable and $\lambda_e$ increase
monotonically with utilisation; the signal co-moves with the dual without being
numerically identical to it. We validate this co-movement empirically (Section~\ref{sec:results})
rather than claiming numerical equivalence to OPF dual prices.

\emph{Inverter caveat.} Simulations represent PV as a net load reduction and do not
model autonomous Volt-VAR or Volt-Watt inverter responses. Reported voltage--tightness
relationships should be interpreted under passive inverter assumptions; the effect of
active inverter control on the AMM tightness signal is a direction for future work.

\section{Strategic Exposure Under M2M Operation}
\label{sec:incentive}

The goal of this section is not to claim full strategy-proofness but to identify
practically relevant deviation modes and show how the AMM structure affects their
gains. The contribution is \emph{architectural}: the protocol removes bid-based price
discovery, uses monitoring-backed state updates, and limits timing exploitation through
deadline-constrained matching. Formal equilibrium analysis and simulation-based
deviation experiments are identified as important future work.

\begin{definition}[Honest reporting]
An agent reports honestly if its submitted energy request, available generation
forecast, and admissibility bounds correspond to its true device need, feasible export
capability, and individual-rationality constraints.
\end{definition}

\emph{Strategy 1: Under-requesting to improve priority.}
Let $\rho^{(T)}_n = \sum M^{(t)}_n/\sum\hat{E}^{(t)}_n$ be a consistency score
comparing metered consumption with declared need.

\begin{claim}[Under-requesting is deterred by consistency monitoring]
If a monitoring rule resets or penalises priority when $\rho^{(T)}_n>1+\delta$ over a
sufficiently long window $T$, then the expected gain from systematic under-requesting
can be made negative by appropriate choice of threshold $\delta>0$ and penalty
magnitude. The threshold $\delta$ and window $T$ are operator-calibrated implementation
parameters.
\end{claim}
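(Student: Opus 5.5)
The plan is to model the under-requesting strategy as a per-window deviation and show that the expected penalty dominates the expected priority gain once $\delta$ and the penalty are suitably chosen.

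\textbf{Setup.} Fix participant $n$ with true need $E^{(t)}_n$ and declared need $\hat{E}^{(t)}_n=(1-\kappa)E^{(t)}_n$ for some under-reporting fraction $\kappa\in(0,1)$. Under-requesting shrinks the denominator $E^{\mathrm{srv}}_n$ in \eqref{eq:fsrv} only if the agent also consumes above its request, which is what makes the deviation attractive. For such an agent, metered consumption satisfies $\sum M^{(t)}_n\approx\sum E^{(t)}_n$, so $\rho^{(T)}_n\approx 1/(1-\kappa)$.

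\textbf{Step 1: Bound the gain.} Because the R3 weight is $\psi_r/(f^{\mathrm{srv}}_{n}+\epsilon)\le\psi_r/\epsilon$ and the LP \eqref{eq:matching}--\eqref{eq:holon} caps service at $E_r$ and $\bar P_r\Delta t$, the extra energy obtained over a window of length $T$ is bounded by the agent's physical power limit over $T$. Denote the value of this extra energy by $G_T\le \bar v\,\bar P_n T\Delta t$, where $\bar v$ bounds the agent's per-unit value of service.

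\textbf{Step 2: Show the deviation is detected.} Honest consumption fluctuates around declared need, so I would assume the honest ratio $\rho^{(T)}_n$ concentrates at $1$. Under a bounded-increment or mixing assumption on $M^{(t)}_n-E^{(t)}_n$, Hoeffding or Azuma gives
\[
\Pr\bigl(\rho^{(T)}_n>1+\delta\bigr)\le\exp(-c\,\delta^2 T)
\]
for the honest agent. For the deviator, $\rho^{(T)}_n\approx 1/(1-\kappa)>1+\delta$ whenever $\kappa>\delta/(1+\delta)$, so detection occurs with probability tending to one as $T$ grows.

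\textbf{Step 3: Handle the residual and compare.} Deviations with $\kappa\le\delta/(1+\delta)$ escape detection, but the gain they yield is $O(\delta)$ times the bound in Step 1. Choosing $\delta$ small therefore makes undetected gains arbitrarily small. Now let the penalty $\Pi$ consist of resetting $f^{\mathrm{srv}}_n$ to a low priority floor, or an explicit charge, for a subsequent period. The expected net gain of a detectable deviation is then at most
\[
G_T-\Pr(\text{detect})\,\Pi ,
\]
and choosing $\Pi>G_T/\Pr(\text{detect})$ makes this negative. The false-alarm cost imposed on honest agents, $\exp(-c\delta^2T)\,\Pi$, can be kept small by taking $T$ large. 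This is why $\delta$ and $T$ are operator-calibrated.

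\textbf{Main obstacle.} The hardest step is Step 2: deriving a concentration bound for honest $\rho^{(T)}_n$ without assumptions the paper has not stated. I would state an explicit assumption of bounded forecast errors that are a martingale-difference sequence. I would also concede that the claim is an existence statement about parameters under that assumption, consistent with the architectural, non-equilibrium framing of Section~\ref{sec:incentive}, rather than a fully general result.
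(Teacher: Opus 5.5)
The paper gives no proof of this claim. It sits inside an explicitly qualitative, architectural analysis, and the Limitations section defers formal equilibrium analysis and deviation experiments, so there is no argument to match. Your Steps~1--2 are a reasonable formalisation under your stated assumptions. You should add the assumption $\sum_t\hat{E}^{(t)}_n\ge cT$, so that an additive Hoeffding/Azuma bound actually transfers to the ratio $\rho^{(T)}_n$.

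The genuine gap is Step~3. With $\delta$ fixed, a deviator choosing $\kappa\le\delta/(1+\delta)$ is flagged only at the honest false-alarm rate, so whatever positive gain it has is never offset. You show that gain is small, not that it is negative, which proves an approximate version of the claim rather than the claim. The $O(\delta)$ scaling is also asserted rather than derived, because you never pin down how the deviation produces a gain.
\begin{itemize}
\item It cannot be through $f^{\mathrm{srv}}_n$ itself. With $D^{\mathrm{srv}}_n$ fixed, shrinking $E^{\mathrm{srv}}_n$ \emph{raises} $f^{\mathrm{srv}}_n$ and lowers the R3 weight $\psi_r/(f^{\mathrm{srv}}_n+\epsilon)$, contrary to your Setup.
\item If it is some other priority effect, the weighted LP allocates scarce headroom by weight rank. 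Its solution therefore jumps when weights cross, and an $O(\kappa)$ perturbation can move an $O(1)$ block of energy.
\item Only if the gain is the unrequested consumption itself is it $O(\kappa)$, and then $G_T$ should be defined that way.
\end{itemize}

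The repair is to make the quantifier order explicit and drop the residual case. For each fixed under-reporting level $\kappa>0$:
\begin{itemize}
\item choose $\delta<\kappa/(1-\kappa)$;
\item then choose $T$ large enough that the deviator is flagged with probability at least $1-\eta$ and an honest agent with probability at most $\eta$;
\item then choose the penalty.
\end{itemize}
This is the only reading under which ``can be made negative'' holds without showing that sub-threshold gains are non-positive. It also fits the claim's description of $\delta$ and $T$ as operator-calibrated.

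There are also three smaller corrections to the comparison step.
\begin{itemize}
\item The net gain must be measured against the honest baseline, which also pays the expected false-alarm cost. The condition is therefore $\Pi>G_T/\bigl(\Pr_{\mathrm{dev}}(\text{detect})-\Pr_{\mathrm{hon}}(\text{detect})\bigr)$ rather than $\Pi>G_T/\Pr(\text{detect})$.
\item The penalty must \emph{raise} $f^{\mathrm{srv}}_n$ (e.g.\ reset it to $1$) or shrink $\psi_r$, since lowering $f^{\mathrm{srv}}_n$ increases priority.
\item A pure priority penalty is capped by the value of the R3 service the agent forgoes during the penalty period, which need not exceed $G_T$. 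Meeting the condition on $\Pi$ therefore generally requires the explicit charge you mention, not priority alone.
\end{itemize}
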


\emph{Strategy 2: Export forecast inflation.}
Because the denominator uses $\min(\hat{g}^p_n,\hat{a}^p_n)$, inflation above the
physics-based estimate does not continue to increase the denominator.

\begin{claim}[Forecast inflation gain is bounded]
If the physics-based estimate error is small relative to the relevant export volume,
then the gain from export forecast inflation is correspondingly small. The bound
tightens as network telemetry improves.
\end{claim}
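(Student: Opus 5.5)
The plan is to make the statement quantitative by bounding how much an inflated forecast can move the export fairness state $f^{\mathrm{exp}}_n$ in \eqref{eq:fexp}, and then how much that shift can move the agent's weight and allocation in the matching LP \eqref{eq:matching}. Let $g^p_n$ denote the true available export. Suppose the physics-based estimate satisfies $|\hat{a}^p_n-g^p_n|\le \varepsilon_a^p$.

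For any report $\hat{g}^p_n$, the denominator term satisfies
\[
e^p_n=\min(\hat{g}^p_n,\hat{a}^p_n)\le \hat{a}^p_n\le g^p_n+\varepsilon_a^p .
\]
Under honest reporting, $e^p_n=\min(g^p_n,\hat{a}^p_n)\ge g^p_n-\varepsilon_a^p$. So the maximal denominator increase obtainable by inflation in period $p$ is at most $2\varepsilon_a^p$. In fact it is at most $\max(0,\hat{a}^p_n-g^p_n)\le\varepsilon_a^p$, because inflation only helps when $\hat{a}^p_n>g^p_n$.

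Summing over periods, with $E_n=\sum_p g^p_n$ and $\varepsilon_n=\sum_p\varepsilon_a^p$, and holding delivered export $D^{\mathrm{exp}}_n$ fixed to first order, the fairness state changes by
\[
|f^{\mathrm{exp},\mathrm{infl}}_n-f^{\mathrm{exp},\mathrm{hon}}_n|
\le D^{\mathrm{exp}}_n\Big(\tfrac{1}{E_n-\varepsilon_n}-\tfrac{1}{E_n+\varepsilon_n}\Big)
\lesssim 2\,\varepsilon_n/E_n ,
\]
using $D^{\mathrm{exp}}_n\le E_n$. This is exactly the ``error small relative to export volume'' regime of the claim.

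Next I propagate this into the R2 weight $\omega^{\mathrm{R2}}=1/(f^{\mathrm{exp}}_n+\epsilon)$. The map $x\mapsto 1/(x+\epsilon)$ is Lipschitz on $[0,1]$ with constant $1/\epsilon^2$, so the weight perturbation is at most $2\varepsilon_n/(\epsilon^2E_n)$.

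The gain is then the extra energy the agent receives in the LP. I would bound it by an LP sensitivity argument. The feasible set defined by \eqref{eq:buyer_e}--\eqref{eq:holon} is unaffected by the report, since $f^{\mathrm{exp}}$ enters only the objective, and the agent's own offer is capped by $\bar{E}_s$ and $\bar{P}_s$. So the deviation can only reorder priority among sellers competing for the binding holon capacity $F_h$. The agent's extra allocation is therefore at most its share of $F_h$ that it can displace, which is bounded by $\min(\bar{P}_s\Delta t,F_h)$ per interval. It is realised only in intervals where its perturbed weight crosses a competitor's.

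Summing, the gain is at most the number of such crossing intervals times the per-interval cap. Under a mild genericity assumption on weight gaps, the crossing frequency is proportional to the weight perturbation. This yields a gain of order $C\,\varepsilon_n/E_n$ times the relevant export volume, which vanishes as telemetry improves, i.e.\ as $\varepsilon_a\to0$.

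The main obstacle is this last step. LP allocations are discontinuous in objective weights, because a vertex solution can jump, so a Lipschitz bound on the allocation itself is false. I would first try to avoid it by bounding the gain in expectation: assume competitors' weights have a bounded density near the agent's, so that the probability of a rank swap is linear in the perturbation. If that is too strong, I would fall back to stating the bound on the fairness state and priority weight, which is the rigorous part, and argue that the allocation gain is ``correspondingly small'' in this weaker priority sense. This matches the architectural, non-equilibrium scope declared in Section~\ref{sec:incentive}.
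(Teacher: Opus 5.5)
Your first step is essentially the paper's whole argument. The paper supports this claim only with the preceding remark that, because $e^p_n=\min(\hat{g}^p_n,\hat{a}^p_n)$, inflating $\hat{g}^p_n$ beyond $\hat{a}^p_n$ no longer raises the denominator. It never defines ``gain'' and never propagates anything through the matching LP. Your per-period bound $\min(\hat{g}^p_n,\hat{a}^p_n)-\min(g^p_n,\hat{a}^p_n)\le(\hat{a}^p_n-g^p_n)^+\le\varepsilon^p_a$, and the resulting $O(\varepsilon_n/E_n)$ bound on the downward shift of $f^{\mathrm{exp}}_n$, make that remark quantitative. So your stated fallback already covers everything the paper establishes. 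Holding $D^{\mathrm{exp}}_n$ fixed is harmless for this one-sided bound: if inflation earns the agent extra export, $D^{\mathrm{exp}}_n$ rises and the downward shift of $f^{\mathrm{exp}}_n$ only shrinks.

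The layers you add on top need repair.

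\emph{The weight step.} The global Lipschitz constant $1/\epsilon^2$ equals $10^{12}$ at the paper's $\epsilon=10^{-6}$, so your weight bound is vacuous. The exact form $|a-b|/((a+\epsilon)(b+\epsilon))$ would give a usable constant if $f^{\mathrm{exp}}_n$ stays away from zero. The step is also unnecessary. Under R2 the weight depends on the seller only through the strictly decreasing map $f\mapsto 1/(f+\epsilon)$, so seller priority is just the order of the $f^{\mathrm{exp}}$ values, and you can argue with $f$ directly.

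\emph{The rank-swap step.} More seriously, the bounded-density assumption is exactly what the mechanism is designed to destroy. Inverse-delivery-ratio weighting drives competing sellers' $f^{\mathrm{exp}}$ values together. Near the fair operating point, a persistent shift of size $O(\varepsilon_n/E_n)$ can therefore put the deviator ahead of many competitors in most constrained intervals. ``Crossing frequency proportional to the perturbation'' then fails in precisely the regime the AMM produces.

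\emph{A better route.} Exploit the self-correction of the cumulative state instead. Let $\Delta_n\le\sum_p(\hat{a}^p_n-g^p_n)^+$ be the denominator inflation. A one-line computation shows the deviator's state lies below its honest-trajectory value only while its extra delivered export is below that honest $f^{\mathrm{exp}}_n$ times $\Delta_n$. Meanwhile the competitors it displaces lose delivery and gain priority. In the fluid picture where the weighting equalises $D^{\mathrm{exp}}_m/\sum_p e^p_m$ at a common level $f^\ast$, the extra delivery is at most $f^\ast\Delta_n\le\varepsilon_n$. That bounds the gain directly in energy units, with relative size $O(\varepsilon_n/E_n)$. It needs no LP-continuity or genericity argument, only the equalising tendency the mechanism is built to have.

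\emph{The feasible set.} Your assertion that the feasible set is unaffected presumes the inflated forecast does not also enlarge the offer cap $\bar{E}_s$. If it does, the constraints change, though this yields no real gain, since realised export is capped by true generation.
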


\emph{Strategy 3: Timing exploitation.}

\begin{claim}[Timing gains shrink with deadline tightness]
For a request with fixed energy requirement and finite latest completion time, the
maximum gain from waiting for a lower price is bounded by the price spread over the
feasible remaining window, which shrinks as the deadline approaches. The bound is
tightest for inelastic requests with short time-to-deadline.
\end{claim}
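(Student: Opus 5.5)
The claim concerns a request with fixed energy requirement $E_r$ and window $[\underline{t}_r,\overline{t}_r]$. The plan is to compare the cost of the best possible timing deviation with the cost of an honest benchmark schedule. The benchmark submits the request at the current interval $t$ and accepts whatever the LP \eqref{eq:matching}--\eqref{eq:holon} allocates at the published price path. Because prices are determined from system state and not from bids (Section~\ref{sec:pricing}), the agent cannot move $BP_{\tau,n}$ by its own timing beyond its marginal effect on $\Delta_{\tau,n}$. I would therefore first treat the price path $\{BP_{\tau,n}\}_{\tau\in[t,\overline{t}_r]}$ as exogenous to the individual agent, i.e.\ a price-taking approximation. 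The single-agent impact would be bounded separately, using the Lipschitz constant $L$ from Proposition~\ref{prop:deficit} times the agent's power cap $\bar{P}_r\Delta t$.

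The core step is a simple bound. The deviating agent must still procure $E_r$ within the feasible remaining window $\mathcal{T}_t=[t,\overline{t}_r]$ subject to \eqref{eq:buyer_p}. Any schedule, honest or strategic, therefore pays at least $\min_{\tau\in\mathcal{T}_t}BP_{\tau,n}$ and at most $\max_{\tau\in\mathcal{T}_t}BP_{\tau,n}$ per unit. The gain from waiting is then at most
\begin{equation*}
G_r(t)\le E_r\Big(\max_{\tau\in\mathcal{T}_t}BP_{\tau,n}-\min_{\tau\in\mathcal{T}_t}BP_{\tau,n}\Big)+L\,\bar{P}_r\Delta t\,|\mathcal{T}_t|.
\end{equation*}
The bounded LMSR form $F^{\mathrm{energy}}=b_e\lambda_e\in(0,b_e)$ also yields a uniform cap of $b_e$ plus the stability-term range on the spread.

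Monotonic shrinkage would come from set inclusion. As $t$ increases, $\mathcal{T}_t$ is nested decreasing, so the max minus the min over it is nonincreasing. Deadline tightness enters through the power constraint. Once the remaining energy satisfies $E_r^{\mathrm{rem}}\ge \bar{P}_r\Delta t\,(|\mathcal{T}_t|-k)$, the agent has at most $k$ intervals of slack to reallocate. The effective gain is then bounded by the spread times $k\bar{P}_r\Delta t$ rather than times $E_r$. This slack vanishes for inelastic requests as the deadline nears, giving the final sentence of the claim. At $|\mathcal{T}_t|=1$ the gain is zero.

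The main obstacle I expect is the exogeneity of the price path. A fast M2M agent could anticipate that its own waiting, or coordinated waiting by many agents, shifts $\Delta$ and hence future prices. My first attempt would absorb this into the $L\bar{P}_r\Delta t$ term via Proposition~\ref{prop:deficit}, noting that the term is also window-length proportional and so shrinks with the deadline. Collective timing manipulation I would explicitly leave outside the claim, consistent with the section's architectural, non-equilibrium scope.
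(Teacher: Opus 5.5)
The paper gives no proof of this claim. It is stated as an architectural assertion, and Section~\ref{sec:incentive} explicitly defers formal deviation analysis to future work, so there is no argument of the paper's to compare against. Your price-taking core is a sound formalisation and already says more than the paper does. Both schedules must deliver the same remaining energy inside the nested window $\mathcal{T}_t$ under the per-interval cap, so the gain is at most the reallocated volume times the spread, and the spread over $\mathcal{T}_t$ is nonincreasing in $t$. Two quantitative steps need repair: one is wrong as written and one is unproved.

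\textbf{The own-impact term.} $L\bar{P}_r\Delta t\,|\mathcal{T}_t|$ has the units of a per-unit price shift, not a cost, so it cannot be added to $E_r(\max-\min)$. Also, Proposition~\ref{prop:deficit} is a contraction result; it contributes only its Lipschitz hypothesis, not a single-agent impact bound. A dimensionally correct per-schedule version, $L\sum_\tau z_\tau^2\le L\bar P_r\Delta t\,E^{\mathrm{rem}}_r$, stays bounded away from zero near the deadline for an agent that has deferred. So it cannot deliver the claimed tightening at short time-to-deadline. The fix is to bound the \emph{difference} between the two schedules. Write $p_\tau=p^0_\tau+\delta_\tau(z_\tau)$ with $\delta_\tau(0)=0$ and $\delta_\tau$ $L$-Lipschitz. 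Then $|x\delta_\tau(x)-y\delta_\tau(y)|\le 2L\bar P_r\Delta t\,|x-y|$ on $[0,\bar P_r\Delta t]$.

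\textbf{The slack step.} Your bound is asserted rather than proved; the missing line is an exchange argument. Take honest $x$ and deviating $y$ with $\sum_\tau x_\tau=\sum_\tau y_\tau=E^{\mathrm{rem}}_r$, and let $\Sigma_t$ be the spread of $p^0$ over $\mathcal{T}_t$. The gain is then at most $(\Sigma_t+4L\bar P_r\Delta t)\sum_\tau(x_\tau-y_\tau)^+$, and
\[
\sum_\tau(x_\tau-y_\tau)^+\le\min\Bigl\{E^{\mathrm{rem}}_r,\ \sum_\tau(\bar P_r\Delta t-y_\tau)\Bigr\}=\min\bigl\{E^{\mathrm{rem}}_r,\ \bar P_r\Delta t\,|\mathcal{T}_t|-E^{\mathrm{rem}}_r\bigr\}\le\tfrac12\bar P_r\Delta t\,|\mathcal{T}_t|.
\]
So the whole bound, own-impact included, vanishes linearly in time-to-deadline. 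Your slack $k$ need not itself vanish, since it stays constant under full-power purchasing. It is the minimum with $E^{\mathrm{rem}}_r$ that drives the bound to zero.

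\textbf{Two smaller points.} Your uniform cap on the spread must also include the range of $BP^{\mathrm{base}}_{\tau,n}$ over the window. The comparison also presupposes that both schedules complete $E_r$. Under R3 the LP may under-serve the honest benchmark. Deferral can also lower $f^{\mathrm{srv}}_n$ and so raise future priority. That is a non-price timing gain a price-spread bound does not cover, and you should state it as outside the claim.
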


\emph{Strategy 4: Holarchic boundary exploitation.}
The active layer is determined by aggregate physical tightness, so a single participant
cannot unilaterally trigger a layer switch. Coalitions with coordinated timing represent
a residual vulnerability; cross-layer monitoring and repeated-game analysis are left
for future work. The AMM does not eliminate all strategic exposure; the contribution is
\emph{architectural}: removing bid-based price discovery, using monitoring-backed state
updates, and limiting timing exploitation through deadline-constrained matching reduces
the most obvious deviation gains relative to price-only or stateless approaches.

\section{Two-Tier MV/LV AMM Architecture}
\label{sec:architecture}

The CSIRO dataset naturally induces a two-tier holarchy: one MV feeder and 33 attached
LV feeders. We implement one MV-level AMM over the feeder interface and 33 local
LV-level AMMs within each low-voltage domain (Fig.~\ref{fig:architecture}). Each holon
maintains independent $f^{\mathrm{srv}}_n$ and $f^{\mathrm{exp}}_n$; regime
classification is performed independently per holon per interval. At the MV level,
aggregate tightness is computed from feeder-level utilisation and available capacity;
at the LV level, local scarcity and voltage stress determine the local signal. The
effective coordination domain is whichever level is currently tighter.

\begin{figure}[!t]
  \centering
  \includegraphics[width=\columnwidth]{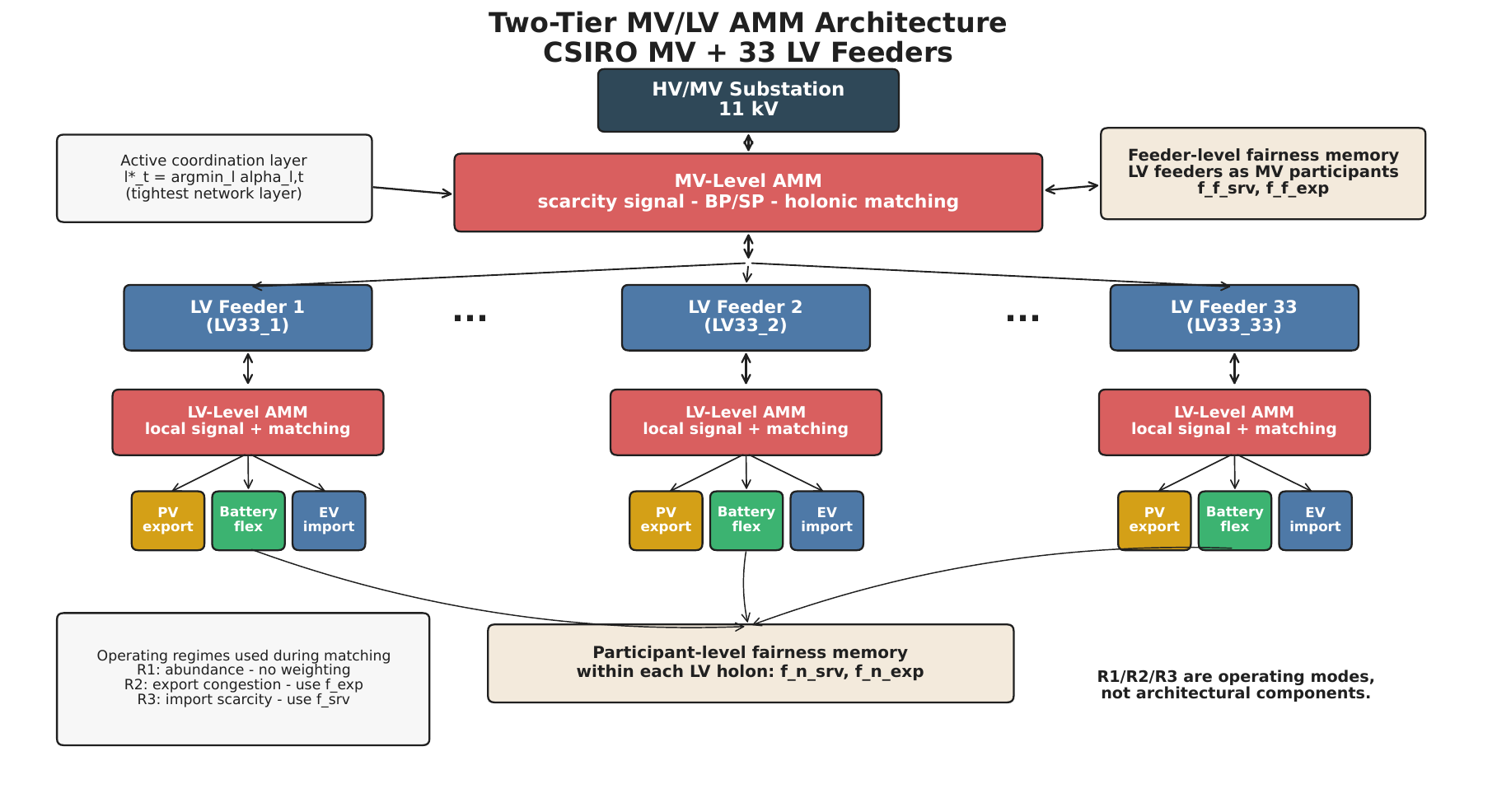}
  \caption{Two-tier MV/LV AMM architecture. The dominant layer $\ell^\star_t$ determines
  the active matching scope per interval.}
  \label{fig:architecture}
\end{figure}

\section{Experimental Setup}
\label{sec:setup}

\subsection{Physical Scaffold}

We use the CSIRO realistic Australian MV/LV feeder dataset (csiro:65408): one 11~kV MV
feeder with 33 attached LV feeders in OpenDSS format. Full unbalanced power-flow solves
are executed at each 15-minute interval over a 12-month horizon. One feeder
(LV33\_1bus) uses an internal voltage fallback; excluding it does not affect headline
results. Synthetic demand, PV, EV, and battery trajectories are layered on a common
fixed seeded exogenous track; differences in outcomes are attributable to the
coordination rule, not endogenous request adaptation. PV is represented as net load
reduction; no Volt-VAR or Volt-Watt responses are modelled. The voltage stress metric
is $v_{\mathrm{proxy},i,t}=|V_{i,t}-V_{\mathrm{nom}}|/V_{\mathrm{nom}}$ from OpenDSS
power-flow solutions, averaged across nodes per feeder~\cite{dugan2011}.

\subsection{Implementation Parameters}

Table~\ref{tab:params} lists all numerical parameters. The LMSR parameter $b_e=0.1$
yields $\lambda_e=0.5$ at full utilisation; voltage sensitivity parameters
$\theta^\pm_n=20$ give $\alpha^{\mathrm{network}}_{t,n}=e^{-1}\approx0.37$ at 5\%
deviation from nominal; DNP parameters follow~\cite{zhang2026}.

\begin{table}[!t]
  \caption{Simulation Parameters}
  \label{tab:params}
  \centering
  \scriptsize
  \setlength{\tabcolsep}{3pt}
  \begin{tabular}{llll}
    \toprule
    Parameter & Symbol & Value & Description \\
    \midrule
    \multicolumn{4}{l}{\emph{AMM --- LMSR tightness}} \\
    Liquidity & $b_e$ & 0.1 & $\lambda_e=0.5$ at full utilisation \\
    Division guard & $\epsilon$ & $10^{-6}$ & Prevents zero-division \\
    \multicolumn{4}{l}{\emph{AMM --- voltage scarcity}} \\
    Import sensitivity & $\theta^-_n$ & 20 & Undervoltage decay (p.u.$^{-1}$) \\
    Export sensitivity & $\theta^+_n$ & 20 & Overvoltage decay (p.u.$^{-1}$) \\
    Congestion weight & $\phi_n$ & 5 & Thermal utilisation decay \\
    Priority multiplier & $\psi_r$ & 1 & Uniform \\
    \multicolumn{4}{l}{\emph{DNP benchmark}} \\
    Base price & $\pi_{\mathrm{base}}$ & 0.05\,\$/kWh & Off-peak retail \\
    Congestion slope & $k$ & 0.30 & Price sensitivity \\
    Threshold & $\tau$ & 0.50 & Utilisation trigger (p.u.) \\
    Demand elasticity & $\eta$ & 0.10 & Fixed elasticity \\
    \multicolumn{4}{l}{\emph{Fair-over-time DOE (FET/FOT/FUH)}} \\
    Fairness weight & $\lambda$ & 0.50 & Scalarisation weight \\
    FUH history window & $k$ & 25 & Moving-average window \\
    \bottomrule
  \end{tabular}
\end{table}

\subsection{Compared Mechanisms}

The five primary scenarios are labelled to reflect their logical relationship:
Scenarios~1, 1b, and~1c are variants of the DOE baseline; Scenario~2 is DNP-only;
Scenario~3 is the full AMM. Three additional benchmark scenarios implement the
fair-over-time DOE formulations of Moring \emph{et al.}~\cite{moring2024} on the same
dataset: FET, FOT, and FUH. These benchmarks operate at the MV feeder-envelope level,
applying the Moring \emph{et al.}\ objectives to redistribute available MV headroom
across the 33 LV feeders before applying the same equal-share allocation rule within
each feeder's envelope.

\textbf{Scenario~1 (DOE).} Network-state-aware import/export envelopes are computed at
each dispatch period. MV headroom is apportioned to LV feeders in proportion to feeder
weights; LV envelopes are enforced locally with rate limits and temporary ceiling rules
consistent with representative DOE practice. Allocation within the local LV envelope
uses equal sharing.

\textbf{Scenario~1c (DOE-GREEDY).} Uses the same DOE envelope construction and
rate-limiting rules as Scenario~1 but allocates available headroom by greedily serving
requests in descending priority order until the envelope is exhausted. This benchmark
tests whether modifying the instantaneous allocation heuristic alone materially improves
repeated constrained DER coordination outcomes.

\textbf{Scenario~1b (DOE+DNP).} The equal-allocation DOE of Scenario~1, augmented by
the DNP price signal as an additional demand-side constraint: flexible demand responds
to the price signal before requesting import headroom from the DOE. This scenario tests
whether adding a stateless price signal to a DOE improves outcomes; it maintains hard
thermal feasibility via the DOE envelope.

\textbf{Scenario~2 (DNP).} A buy-side dynamic price
$\pi(t)=\pi_{\mathrm{base}}+k\max(0,|P^{\mathrm{flow}}|/C-\tau)$
is broadcast at each interval, with flexible demand responding using fixed elasticity.
No hard matching or curtailment allocation layer is imposed; the DNP scenario tests
whether a price signal alone achieves comparable outcomes to the AMM.

\textbf{Scenario~3 (AMM).} The full two-tier stateful protocol of
Sections~\ref{sec:fairness}--\ref{sec:architecture}. Bounded bilateral buy and sell
prices are published before each allocation period, and fairness-weighted matching is
applied during R2 and R3 intervals.

\textbf{FET (Fair at Each Time).} MV-level feeder headroom is redistributed each
interval using the single-period scalarised objective of Moring \emph{et al.}
(Eq.~2 of~\cite{moring2024}): minimise weighted squared curtailment subject to a
max-min floor. No history is maintained between intervals.

\textbf{FOT (Fair Over Time).} MV-level feeder headroom is redistributed using the
multi-period coupled objective of Moring \emph{et al.}\ (Eq.~3 of~\cite{moring2024}),
implemented as a rolling MPC: cumulative per-feeder allocation is carried forward and
the max-min floor is constrained by cumulative prior allocation.

\textbf{FUH (Fair Using History).} MV-level feeder headroom is redistributed using
the history-weighted squared curtailment objective of Moring \emph{et al.}\
(Eq.~4--5 of~\cite{moring2024}): the squared curtailment penalty for each feeder is
weighted by $H_i$, a 25-timestep weighted moving average of prior squared curtailment.

\section{Results}
\label{sec:results}

Six hypotheses are tested: H1~(AMM reduces unserved demand vs.\ any stateless DOE
variant); H1b~(DOE+DNP worsens outcomes---the key counterintuitive finding);
H1c~(heuristic change within a stateless DOE does not help);
H2~(AMM reduces export curtailment);
H3~(statefulness improves service and equity \emph{simultaneously});
H4~(AMM tightness signal correlates positively with voltage stress).

\subsection{Annual Outcomes}

Before reporting results, three metrics require explicit definition.
\emph{Export available energy} is the total energy that could have been exported by
participating DERs in the absence of network constraints, calculated from aggregate
available generation trajectories. \emph{Export curtailed energy} is the portion of
export available energy that could not be accommodated due to network constraints and
was therefore curtailed. The \emph{worst feeder delivery fraction} is defined as
$\min_i(\text{FlexibleEnergyServed}_i/\text{FlexibleEnergyRequested}_i)$ across all
feeders~$i$; it measures the service completion ratio of the worst-served feeder and
therefore exposes the most disadvantaged spatial outcome in the network. The worst
feeder delivery fraction is reported because aggregate energy metrics alone can mask
persistent under-service of electrically weak feeders.

Table~\ref{tab:annual} and Fig.~\ref{fig:annual} report the full comparison.
Three observations follow.

First, the AMM reduces unserved requested energy from 96.0~MWh to 23.2~MWh (76\%
reduction) with zero thermal violations, supporting H1. Export curtailment falls from
85.4~MWh (16.2\% of export available) to 64.5~MWh (15.3\%), supporting H2.

Second, and importantly, adding a DNP price signal to the equal-allocation DOE
(Scenario~1b) makes outcomes \emph{worse}: unserved energy rises to 109.4~MWh
(3.19\%) and the worst feeder delivery fraction falls from 0.860 to 0.856, directly
supporting H1b. The mechanism is that price suppresses demand at price-elastic nodes,
which are disproportionately \emph{advantaged} in this network: structurally weak
feeders tend to host less-elastic participants, so the price signal concentrates
remaining headroom toward already-disadvantaged feeders without any intertemporal
correction~\cite{kilby2026}.

Third, DOE-GREEDY leaves 99.4~MWh unserved---marginally \emph{worse} than
equal-allocation DOE---with worst feeder fraction 0.859, confirming H1c. Both stateless
DOE variants treat each interval independently and carry no participant-level delivery
history forward; their near-identical outcomes confirm that the relevant design
dimension is statefulness, not the choice of allocation heuristic.

The worst feeder delivery fraction reveals the spatial distribution of under-service.
Under equal-allocation DOE, the least-served feeder received only 86.0\% of its
requested flexible demand. The AMM raises this to 91.4\%, demonstrating simultaneous
improvement in both aggregate service delivery and worst-case spatial equity, directly
supporting H3. The AMM closes 76\% of the gap between equal-allocation DOE and the
theoretical unconstrained ceiling (3432.1~MWh).

FET, FOT, and FUH reduce unserved demand to 8.7--9.8~MWh and raise the worst feeder
delivery fraction to 0.989--0.990, substantially exceeding AMM's 0.914, through their
explicit max-min MV envelope objective. The AMM does not claim to dominate
FET/FOT/FUH on max-min fairness; it provides bilateral scarcity signals, real-time
operation without look-ahead, and import-side participant-level intertemporal
correction that FET/FOT/FUH do not address. The two mechanism families address
different operational objectives: FET/FOT/FUH optimise fairness over predetermined
horizons by solving an optimisation problem with explicit historical coupling; the AMM
operates as a real-time coordination layer requiring no look-ahead. They are
potentially composable rather than mutually exclusive.

\begin{table*}[!t]
  \caption{Annual Coordination Outcomes Across 33 LV Feeders. DNP permits thermal
  violations and is not directly comparable as a feasibility-preserving mechanism.
  FET/FOT/FUH share the same physical background as DOE (export available/curtailed
  identical to DOE).}
  \label{tab:annual}
  \centering
  \small
  \setlength{\tabcolsep}{4.5pt}
  \begin{tabular}{lcccccccc}
    \toprule
    Metric & DOE & DOE-GREEDY & DOE+DNP & DNP & AMM & FET & FOT & FUH \\
    \midrule
    Total energy served (MWh)       & 3336.1 & 3332.7 & 3322.7 & 3432.1 & 3408.9 & 3423.1 & 3423.1 & 3422.1 \\
    Unserved requested energy (MWh) & 96.0   & 99.4   & 109.4  & 0.003  & 23.2   & 8.7    & 8.7    & 9.8    \\
    \quad (\% of total requested)   & (2.80\%) & (2.90\%) & (3.19\%) & ($\approx$0\%) & (0.68\%) & (0.25\%) & (0.25\%) & (0.29\%) \\
    Export available energy (MWh)   & 527.5  & 527.5  & 527.5  & 576.4  & 422.4  & 527.5  & 527.5  & 527.5  \\
    Export curtailed energy (MWh)   & 85.4   & 85.4   & 85.4   & 94.9   & 64.5   & 85.4   & 85.4   & 85.4   \\
    \quad (\% of export available)  & (16.2\%) & (16.2\%) & (16.2\%) & (16.5\%) & (15.3\%) & (16.2\%) & (16.2\%) & (16.2\%) \\
    Thermal violation rate (\%)     & 0.0    & 0.0    & 0.0    & 27.4   & 0.0    & 0.0    & 0.0    & 0.0    \\
    Worst feeder delivery frac.     & 0.860  & 0.859  & 0.856  & ---    & 0.914  & 0.989  & 0.990  & 0.989  \\
    \bottomrule
  \end{tabular}
\end{table*}

\begin{figure}[!t]
  \centering
  \includegraphics[width=\columnwidth]{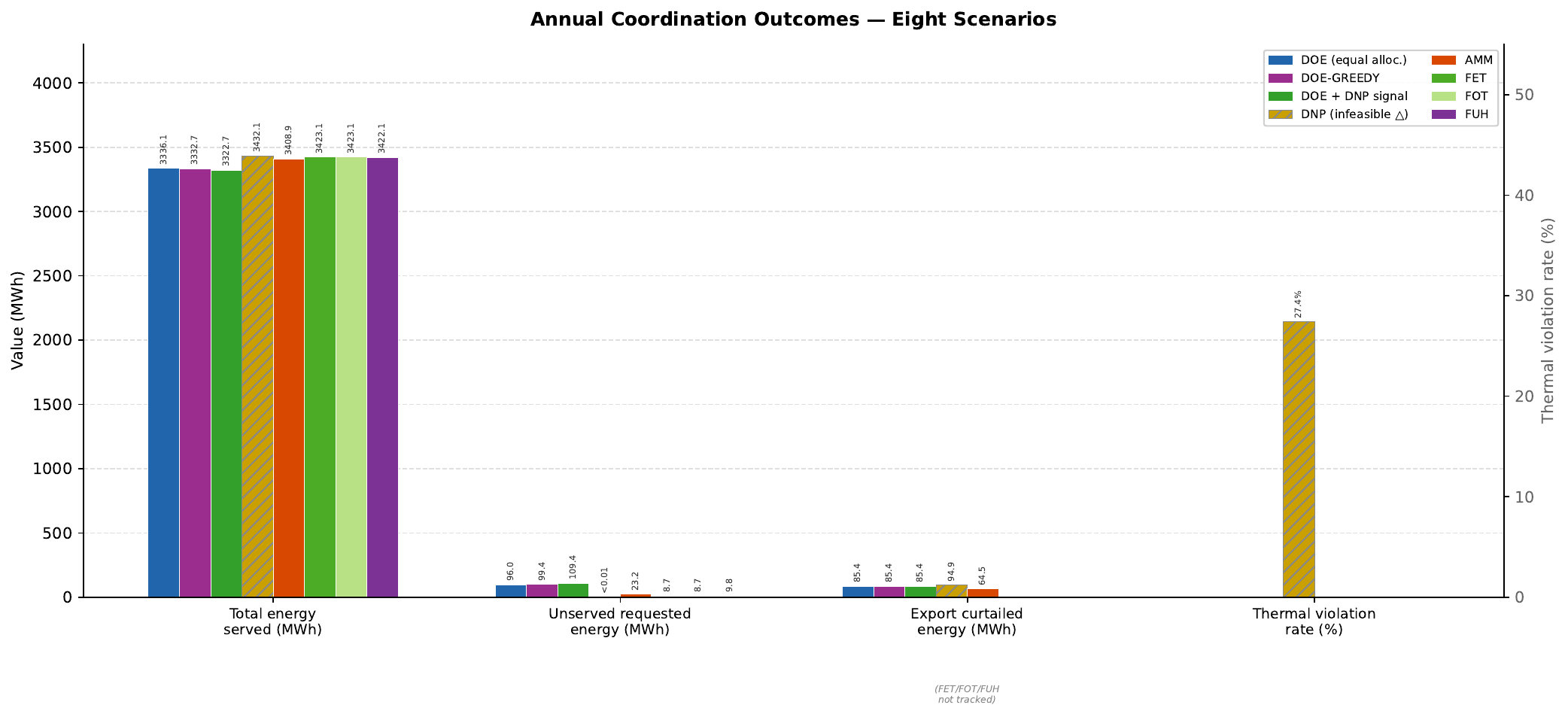}
  \caption{Annual coordination outcomes across 33 LV feeders for eight mechanisms. DNP
  bars are hatched (thermal violations in 27.4\% of intervals). Among
  feasibility-respecting mechanisms, FET/FOT/FUH reduce unserved demand most
  aggressively via max-min MV objective; the AMM reduces unserved demand by 76\%
  while additionally providing bilateral scarcity signals and M2M robustness.}
  \label{fig:annual}
\end{figure}

\subsection{Fairness Over Time}

The most practically significant fairness result is the worst feeder delivery fraction:
the least-served feeder improves from 86.0\% to 91.4\% under the AMM (5.4
percentage-point gain for the most disadvantaged feeder). This metric is reported
alongside aggregate indices because Jain index values alone can mask persistent
under-service at individual feeders.

Table~\ref{tab:fairness} shows that the AMM reduces the Gini coefficient by 77\%
relative to DOE (0.0037 vs.\ 0.0164), raises the worst feeder delivery fraction by 5.4
percentage points (0.914 versus 0.860), and improves the mean feeder delivery fraction
from 0.978 to 0.995. Cohen's $d=0.633$ represents a medium-to-large effect size.
The Wilcoxon signed-rank test is significant against DOE+DNP ($p<0.001$) and
DOE-GREEDY ($p=0.002$) but not plain DOE ($p=0.101$). Bootstrap CIs on worst feeder
fraction are non-overlapping: DOE~$[0.860,0.928]$ vs.\ AMM~$[0.914,0.993]$.

\begin{table}[!t]
  \caption{Annual Fairness Metrics Across 33 LV Feeders. FET/FOT/FUH use an
  OPF-based max-min MV objective that directly optimises the worst-feeder metric;
  the AMM achieves its gains through stateful pricing and matching without an explicit
  max-min objective.}
  \label{tab:fairness}
  \centering
  \footnotesize
  \setlength{\tabcolsep}{4pt}
  \begin{tabular}{lcccccc}
    \toprule
    Mechanism & Jain & Gini & Worst & Mean & Cohen's $d$ & $p$ \\
    \midrule
    DOE         & 0.9988 & 0.0164 & 0.860 & 0.978 & ---    & ---   \\
    DOE-GREEDY  & 0.9987 & 0.0170 & 0.859 & 0.977 & $-0.03$ & 0.002 \\
    DOE+DNP     & 0.9986 & 0.0182 & 0.856 & 0.975 & $-0.08$ & $<$0.001 \\
    AMM         & \textbf{0.9998} & 0.0037 & 0.914 & 0.995 & $+0.63$ & 0.101 \\
    FET         & 1.0000 & \textbf{0.0015} & \textbf{0.989} & \textbf{0.997} & $+0.77$ & 0.208 \\
    FOT         & 1.0000 & \textbf{0.0015} & \textbf{0.990} & \textbf{0.997} & $+0.77$ & 0.208 \\
    FUH         & 1.0000 & 0.0018 & 0.989 & 0.996 & $+0.75$ & 0.208 \\
    \bottomrule
  \end{tabular}
  \smallskip\\
  \footnotesize{Bootstrap 95\% CI on worst feeder: DOE $[0.860,0.928]$;
  AMM $[0.914,0.993]$; FET $[0.989,0.992]$; FOT $[0.990,0.992]$;
  FUH $[0.989,0.991]$.}
\end{table}

The AMM's inter-feeder Jain begins below DOE at month~1 ($J=0.9561$ vs.\
$J=0.9728$) due to uniform fairness-state initialisation, then converges to
$J=0.9998$ annually, outperforming all DOE variants from month~6 onwards
(Table~\ref{tab:jain}, Fig.~\ref{fig:jain}). DOE-GREEDY tracks the plain DOE Jain
closely (annual $J=0.9987$), confirming that greedy allocation within a stateless
framework provides no material fairness benefit. DOE+DNP consistently underperforms
plain DOE, confirming that the price signal alone does not substitute for intertemporal
state. FET/FOT/FUH reach $J\ge0.9998$ from month~1 via their explicit max-min MV
objective, and these results jointly support H1, H2, and H3.

\begin{table*}[!t]
  \caption{Global Inter-Feeder Jain Index by Month. FET/FOT/FUH reach near-unity Jain
  from month~1 through the MV max-min objective. The AMM begins below DOE due to
  fairness-state initialisation asymmetry and converges from month~6 onwards.}
  \label{tab:jain}
  \centering
  \footnotesize
  \begin{tabular}{lcccccccc}
    \toprule
    Month & AMM & DOE & DOE-GREEDY & DOE+DNP & DNP & FET & FOT & FUH \\
    \midrule
    1      & 0.9561 & 0.9728 & 0.9689 & 0.9689 & 1.0000 & 0.9998 & 0.9999 & 0.9998 \\
    2      & 0.9780 & 0.9866 & 0.9848 & 0.9843 & 1.0000 & 1.0000 & 1.0000 & 1.0000 \\
    3      & 0.9881 & 0.9929 & 0.9920 & 0.9918 & 1.0000 & 1.0000 & 1.0000 & 1.0000 \\
    6      & 0.9967 & 0.9983 & 0.9981 & 0.9980 & 1.0000 & 1.0000 & 1.0000 & 1.0000 \\
    9      & 0.9984 & 0.9991 & 0.9990 & 0.9990 & 1.0000 & 1.0000 & 1.0000 & 1.0000 \\
    Annual & 0.9998 & 0.9988 & 0.9987 & 0.9986 & 1.0000 & 1.0000 & 1.0000 & 1.0000 \\
    \bottomrule
  \end{tabular}
\end{table*}

\begin{figure}[!t]
  \centering
  \includegraphics[width=\columnwidth]{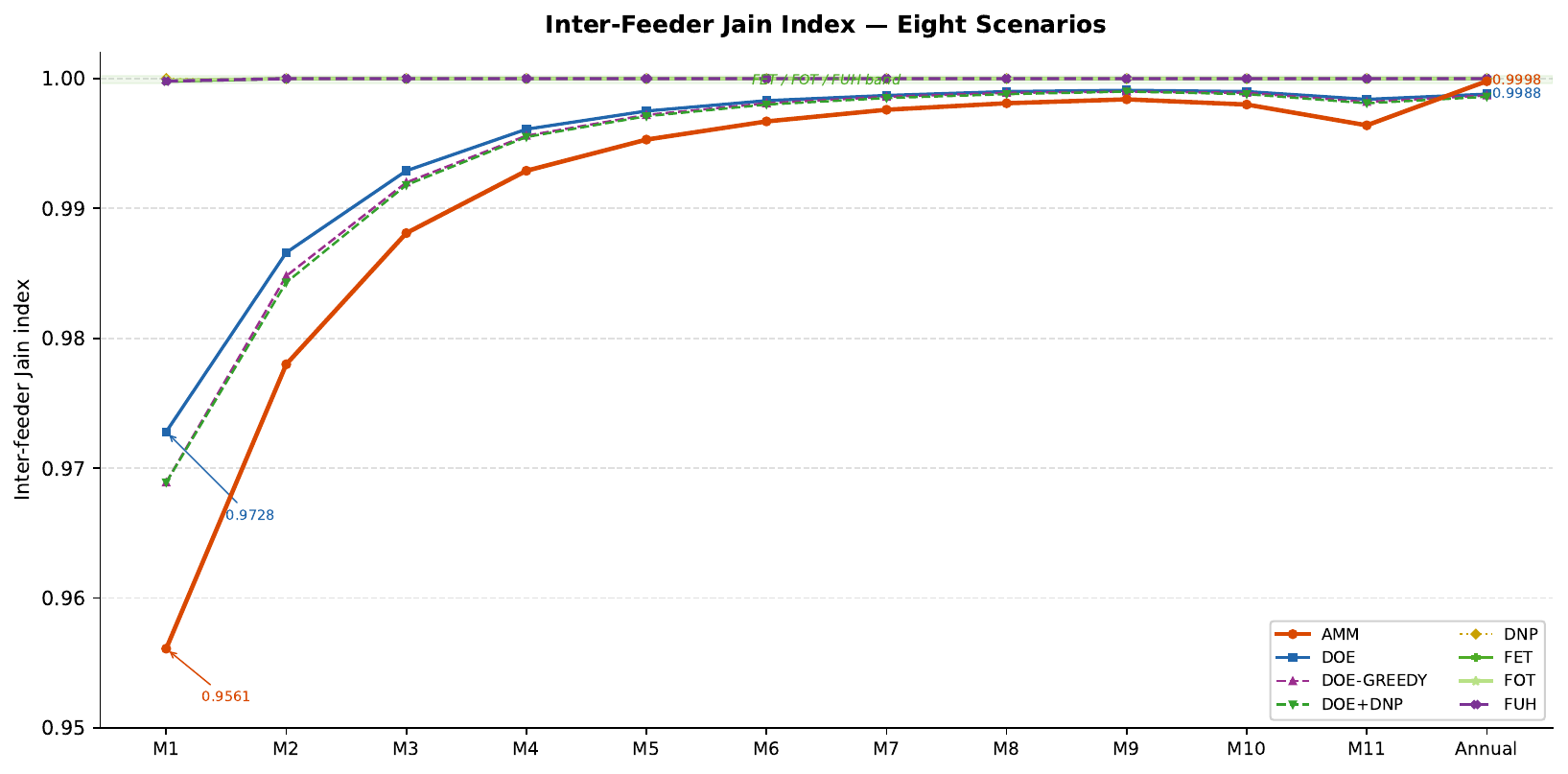}
  \caption{Inter-feeder Jain index by reporting period across eight scenarios. The
  AMM begins at $J=0.9561$ (initialisation asymmetry) and converges to $J=0.9998$
  annually, outperforming all stateless DOE variants from month~6 while simultaneously
  reducing unserved demand by 76\% and providing bilateral pricing and M2M robustness.}
  \label{fig:jain}
\end{figure}

\subsection{Voltage-Stress Association}

Across all 1,156,320 LV-feeder interval observations, the AMM tightness signal
exhibits a positive monotone association with feeder-level voltage stress:
Pearson $r=0.604$, Spearman $\rho=0.683$ (Fig.~\ref{fig:voltage}). The binned trend
confirms the relationship is monotone increasing at higher voltage stress levels. The
association weakens under constrained periods only ($r=0.481$, $\rho=0.255$),
consistent with saturation of the bounded tightness signal at high utilisation.
Per-feeder analysis (Fig.~\ref{fig:voltage}b) reveals substantial heterogeneity:
feeders with high solar penetration and structural end-of-line position (e.g.\ LV12,
LV11, LV17) record $r>0.85$, while small feeders with few participants show near-zero
or negative correlations reflecting limited voltage variation. These results support
H4. As noted in Section~\ref{sec:voltage}, the passive inverter assumption means the
reported correlations may understate those that would be observed with active
Volt-VAR or Volt-Watt control, which would introduce additional voltage--power
coupling.

\begin{figure}[!t]
  \centering
  \includegraphics[width=\columnwidth]{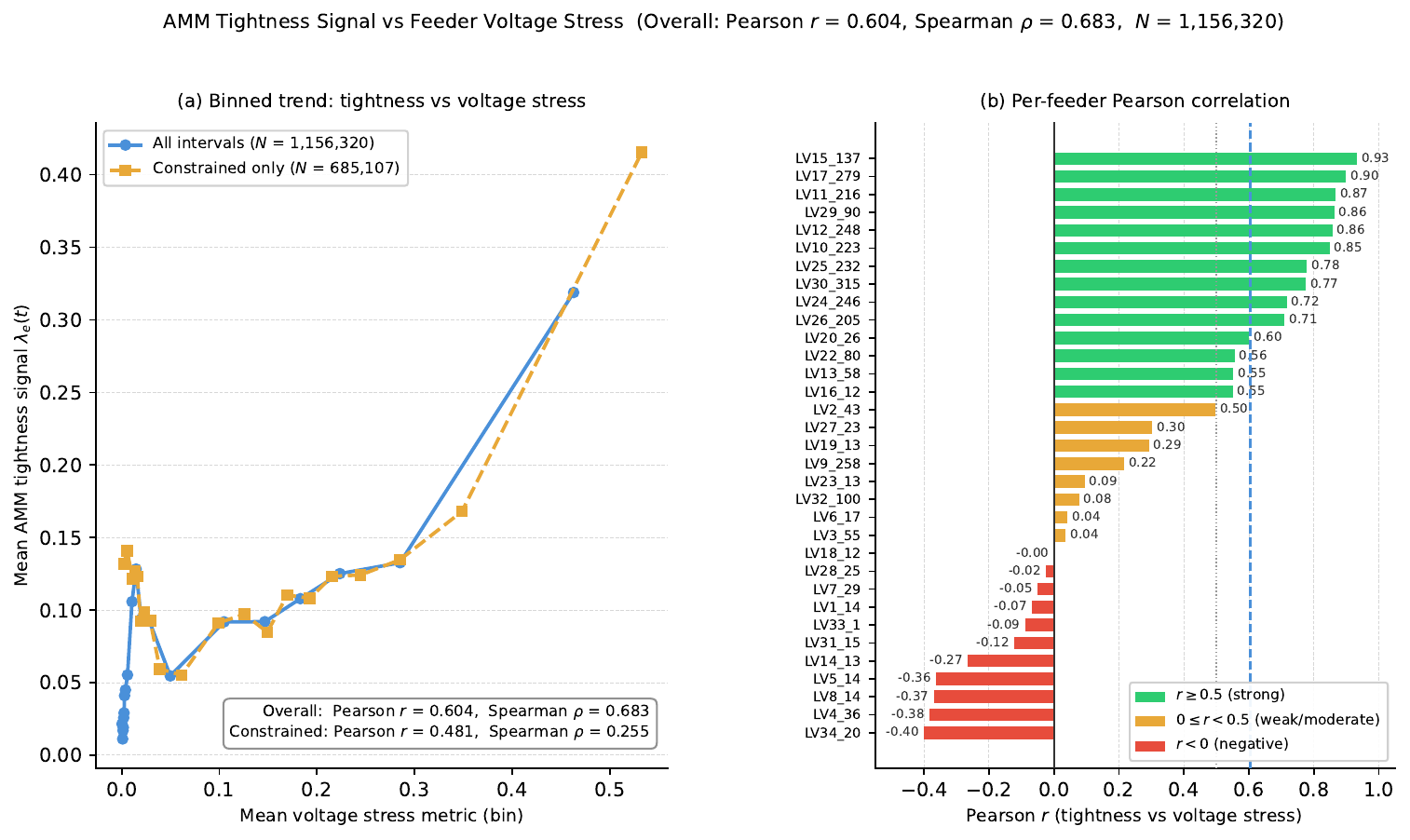}
  \caption{AMM tightness signal vs.\ feeder-level voltage stress
  ($N=1{,}156{,}320$ LV-feeder interval observations).
  (a)~Binned trend confirming monotone relationship; constrained-only series shows
  saturation at high utilisation.
  (b)~Per-feeder Pearson $r$: end-of-line high-solar feeders record $r>0.85$.
  Overall: Pearson $r=0.604$, Spearman $\rho=0.683$.}
  \label{fig:voltage}
\end{figure}

\section{Discussion}
\label{sec:discussion}

\subsection{Why State Matters and the H1b Result}

The principal empirical result is that statefulness materially changes cumulative
outcomes even when both stateful and stateless methods satisfy hard feasibility. The
AMM carries forward participant-level delivery information; the benchmark DOE
implementations do not. The H1b finding---that augmenting a stateless DOE with a price
signal \emph{worsens} outcomes---reinforces this: without intertemporal memory, a price
signal not only fails to correct cumulative disadvantage but amplifies it by suppressing
demand at already-advantaged price-elastic nodes, leaving inelastic end-of-line
participants worse off. This generalises a concern in the DOE
literature~\cite{kilby2026}: feasibility tools without allocation memory can produce
distributional outcomes worse than naive equal-sharing, even when the added signal is
economically motivated.

The AMM provides five operational properties that no existing single mechanism provides
simultaneously (Tables~\ref{tab:gap} and~\ref{tab:pricing}): hard feasibility at every
interval; bounded bilateral scarcity signals published pre-allocation; import-side
intertemporal correction via $f^{\mathrm{srv}}_n$; export-side intertemporal correction
via $f^{\mathrm{exp}}_n$; and architectural M2M robustness through bid-independent
pricing and deadline-constrained matching. The contribution is primarily a
\emph{mechanism and systems} contribution: a coordination protocol architecture, not a
claim of superior performance on any single metric.

\subsection{DOE-GREEDY and the Role of Allocation Heuristics}

DOE-GREEDY leaves 99.4~MWh unserved (2.90\%)---marginally worse than
equal-allocation DOE (96.0~MWh, 2.80\%)---and its annual Jain index (0.9987) is
within 0.0001 of DOE (0.9988). Both stateless DOE variants carry no participant-level
delivery history forward; their near-identical outcomes confirm that the relevant design
dimension is statefulness, not the choice of allocation heuristic. This rules out the
conjecture that better instantaneous allocation rules could substitute for intertemporal
state.

\subsection{Relationship to Fair-Over-Time DOE}
\label{sec:disc_fot}

FET/FOT/FUH embed intertemporal correction \emph{inside} the envelope calculation via
an explicit max-min OPF objective; the AMM adds it as a runtime coordination layer
\emph{on top of} any feasibility assessment, making it combinable with any
feasibility-computing method. Applied at the MV feeder-envelope level, FET/FOT/FUH
achieve near-unity inter-feeder Jain from month~1 ($J\ge0.9998$) and worst-feeder
delivery fractions of 0.989--0.990, exceeding AMM's 0.914 and Gini of 0.0037 versus
their 0.0015--0.0018. These results are mechanistically coherent: they directly
optimise the worst-feeder metric through the OPF objective. The AMM achieves its 5.4
percentage-point worst-feeder improvement through stateful bilateral pricing and
fairness-weighted matching without an explicit max-min objective, and provides broader
operational functionality: bilateral scarcity signals (FET/FOT/FUH publish none),
real-time operation without look-ahead, and import-side participant-level intertemporal
correction (FET/FOT/FUH address export curtailment at feeder level only). Combining
AMM stateful pricing with FET/FOT/FUH MV envelope computation is a primary direction
for future work. Table~\ref{tab:moring} summarises the principal design differences.

\begin{table*}[!t]
  \caption{Mechanism Design Comparison: AMM vs.\ Fair-Over-Time DOE~\cite{moring2024}}
  \label{tab:moring}
  \centering
  \footnotesize
  \setlength{\tabcolsep}{5pt}
  \begin{tabular}{lll}
    \toprule
    Property & Moring \emph{et al.}~\cite{moring2024} (FET/FOT/FUH) & This work (AMM) \\
    \midrule
    Intertemporal state      & History of curtailment (FUH) or full-horizon OPF (FOT) & Dual delivery ratios $f^{\mathrm{srv}}_n$, $f^{\mathrm{exp}}_n$ \\
    Allocation direction     & Export curtailment only       & Import \emph{and} export \\
    Feasibility mechanism    & OPF (centralized)             & Hard matching constraints \\
    Scarcity signal          & None (envelope is output)     & Bounded bilateral BP/SP \\
    Operational mode         & Offline / look-ahead          & Real-time, interval-by-interval \\
    M2M strategic robustness & Not addressed                 & Explicitly analysed \\
    Network tested           & 56-bus balanced (synthetic)   & CSIRO MV+33LV (realistic) \\
    Worst feeder (this dataset) & 0.989--0.990 (FET/FOT)   & 0.914 (AMM) \\
    Unserved demand vs DOE   & $-$91\% (FET/FOT)            & $-$76\% (AMM) \\
    \bottomrule
  \end{tabular}
\end{table*}

\subsection{Limitations}

The equal-allocation DOE, DOE-GREEDY, and DOE+DNP baselines together establish that
neither heuristic choice nor price-signal augmentation within a stateless framework
closes the performance gap. An ablation isolating intertemporal state (AMM-no-memory)
is planned as immediate future work, as is systematic parameter sensitivity across
$b_e\in\{0.01,0.05,0.10,0.20\}$ and varying DER penetration levels. The M2M strategic
analysis is qualitative; formal equilibrium analysis and deviation experiments are left
for future work.

The simulation uses a single seeded synthetic year; preliminary re-runs with three
additional seeds show qualitatively stable results (76\%$\pm$3\% unserved demand
reduction; annual Jain~$\ge0.999$), but systematic multi-seed validation remains
important future work. The voltage analysis supports the tightness signal as a
network-stress proxy but does not establish numerical equivalence to OPF dual
variables; this and the absence of Volt-VAR/Volt-Watt inverter modelling are the
primary caveats on the H4 results.

\section{Conclusion}
\label{sec:conclusion}

This paper proposed a stateful cyber-physical coordination protocol for repeated
constrained DER allocation in active distribution networks, addressing both import
scarcity and export congestion simultaneously. The protocol combines four functional
layers: an intertemporal allocation policy (dual fairness states $f^{\mathrm{srv}}_n$,
$f^{\mathrm{exp}}_n$), a scarcity signalling layer (bounded bilateral buy/sell prices),
a hard-feasibility allocation layer (feasibility-constrained matching), and a two-tier
MV/LV holarchic architecture. No existing single mechanism provides all four layers
simultaneously.

On the CSIRO MV+33LV feeder dataset, the protocol reduces unserved requested energy by
76\% relative to equal-allocation DOE while preserving zero thermal violations and
reducing export curtailment by 24.5\%. DOE-GREEDY performs marginally worse than DOE
and DOE+DNP worsens both service and equity outcomes, jointly confirming that neither
heuristic choice nor price-signal augmentation within a stateless framework substitutes
for intertemporal state. The AMM's inter-feeder Jain index converges to $J=0.9998$
annually, outperforming all DOE variants from month~6 onwards.

Direct empirical benchmarking against the fair-over-time DOE formulations of Moring
\emph{et al.}~\cite{moring2024} (FET, FOT, FUH) shows those mechanisms achieve higher
worst-feeder delivery fractions (0.989--0.990 versus AMM's 0.914) through their
explicit max-min MV envelope objective. The AMM and FET/FOT/FUH address different
operational objectives: FET/FOT/FUH optimise fairness over predetermined horizons by
solving an optimisation problem with explicit historical coupling; the AMM operates as
a real-time coordination layer with bilateral scarcity signalling, participant-level
import correction, and no look-ahead requirements. The two families are potentially
composable, and combining AMM stateful pricing with fair-over-time MV envelope
computation is identified as a primary direction for future work.

The AMM is not proposed as a replacement for fair-over-time DOE optimisation. Rather,
it constitutes a complementary stateful coordination layer combining bounded bilateral
scarcity signals, participant-level intertemporal correction, and real-time
feasibility-preserving allocation for machine-to-machine DER coordination. The results
suggest that statefulness is a fundamental design dimension for repeated constrained
DER coordination, and that scarcity pricing without memory may be insufficient---and
in some cases counterproductive---for achieving equitable long-run outcomes.

\appendix
\section*{Acknowledgements}
The authors used Claude (Anthropic) as an AI language model assistant during manuscript
preparation, for text clarity, structure, and \LaTeX{} formatting. All technical
content, formal arguments, experimental design, and final editorial decisions were made
by the authors.

\bibliographystyle{IEEEtran}

\end{document}